\newtheorem{Thm}{Theorem}
\newtheorem{Cor}{Corollary}
\newtheorem{Prop}{Proposition}
\theoremstyle{definition}
\newcommand{\bra}[1]{{\left\langle #1 \right|}}
\newcommand{\ket}[1]{{\left| #1 \right\rangle}}
\newcommand{\T}{\mbox{$\mathrm{tr}$}}
\begin{document}
%%%%%%%%%%%%%%%%%%%%%%%%%%%%%%%%%%%%%%%%%%%%%%%%%%%%%%%%%%%%%%%%%%%%%%%%%%
%                                                                        %
%                                 Title                                  %
%                                                                        %
%%%%%%%%%%%%%%%%%%%%%%%%%%%%%%%%%%%%%%%%%%%%%%%%%%%%%%%%%%%%%%%%%%%%%%%%%%
\title{Polygamy of multi-party $q$-expected quantum entanglement
}
\author{Jeong San Kim}
\email{freddie1@khu.ac.kr} \affiliation{
 Department of Applied Mathematics and Institute of Natural Sciences, Kyung Hee University, Yongin-si, Gyeonggi-do 446-701, Korea
}
\date{\today}

%%%%%%%%%%%%%%%%%%%%%%%%%%%%%%%%%%%%%%%%%%%%%%%%%%%%%%%%%%%%%%%%%%%%%%%%%%
%                                                                        %
%                              Abstract                                  %
%                                                                        %
%%%%%%%%%%%%%%%%%%%%%%%%%%%%%%%%%%%%%%%%%%%%%%%%%%%%%%%%%%%%%%%%%%%%%%%%%%
\begin{abstract}
We characterize the polygamy nature of quantum entanglement in multi-party systems in terms of $q$-expectation value for the full range of $q\geq 1$. By investigating some properties of generalized quantum correlations in terms of $q$-expectation value and Tsallis $q$-entropy, we establish a class of polygamy inequalities of multi-party quantum entanglement in arbitrary dimensions based on $q$-expected entanglement measure. As Tsallis $q$-entropy is reduced to von Neumann entropy, and $q$-expectation value becomes the ordinary expectation value when $q$ tends to $1$, our results encapsulate previous results of polygamy inequalities based on von Neumann entropy as special cases.
\end{abstract}

\pacs{
03.67.Mn,  % Entanglement production, characterization and manipulation
03.65.Ud % Entanglement and quantum non-locality
}
%\keywords{}
\maketitle

%%%%%%%%%%%%%%%%%%%%%%%%%%%%%%%%%%%%%%%%%%%%%%%%%%%%%%%%%%%%%%%%%%%%%%
%%%                                                                %%%
%%%                         Introduction                           %%%
%%%                                                                %%%
%%%%%%%%%%%%%%%%%%%%%%%%%%%%%%%%%%%%%%%%%%%%%%%%%%%%%%%%%%%%%%%%%%%%%%
\section{Introduction}
Quantum entanglement is a quintessential phenomenon of quantum mechanics showing the non-local nature of quantum states in multi-party quantum systems. As a quantum correlation among distinct parties, entanglement plays a central role in quantum information and computation theory with many applications~\cite{tele, qkd1, qkd2}. Thus it has been an important and even challenging task to have a proper way of quantifying entanglement to understand full characteristics of quantum entanglement in various quantum systems.

One property that makes quantum entanglement fundamentally different from other classical correlations is the restricted shareability and distribution of entanglement in multi-party quantum systems, namely, {\em monogamy} and {\em polygamy } relations of quantum entanglement~\cite{T04, KGS}.
Mathematically, the monogamy of quantum entanglement has been characterized as monogamy inequalities using various entanglement measures~\cite{ckw, ov, kds, KSRenyi, KimT, KSU}. These monogamy inequalities of entanglement show the mutually exclusive structures of entanglement shareability in multi-party quantum systems. The polygamy relation of quantum entanglement was also quantitatively characterized as polygamy inequalities in multi-party quantum systems~\cite{GBS, BGK, KimGP}.

By using the concept of {\em $q$-expectation value} for any nonnegative real parameter $q$, von Neumann entropy can be generalized into a one-parameter class of entropy functions, namely {\em Tsallis $q$-entropy}~\cite{tsallis, lv}. Because $q$-expectation value is theoretically consistent with the minimum cross-entropy principle, Tsallis $q$-entropy is considered to be more relevant to nonextensive statistical mechanics~\cite{LP, Abe}. Tsallis $q$-entropy can also be used to characterizes classical statistical correlations inherent in quantum states~\cite{rr, bpcp}.

In quantum entanglement theory, Tsallis $q$-entropy can be used to define a faithful entanglement measure because the property of {\em entanglement monotone} is guaranteed by the concavity of Tsallis $q$-entropy for $q>0$~\cite{vidal}. It is also known that some conditions on separability criteria of quantum states can be established based on Tsallis $q$-entropy~\cite{ar,tlb,rc}.

Here, we characterize the polygamy relation of quantum entanglement in multi-party systems in terms of $q$-expectation value for the full range of $q\geq 1$. We first recall the generalized definitions of various classical and quantum correlations in terms of Tsallis $q$-entropy and $q$-expectation value. By investigating some properties of generalized correlations in relation to classical-classical-quantum(ccq) states, we establish a class of polygamy inequalities of multi-party quantum entanglement in arbitrary dimensions in terms of the $q$-expected entanglement measure.

Due to the existence of equivalence among the monogamy and polygamy inequalities of $q$-expected quantum correlations~\cite{Kim19}, our results also guarantee several classes of monogamy and polygamy inequalities about $q$-expected entanglement and discord distributed in three-party quantum systems. As Tsallis $q$-entropy is reduced to von Neumann entropy, and $q$-expectation value becomes the ordinary expectation value when $q$ tends to $1$, our results encapsulate previous results of polygamy inequalities based on von Neumann entropy as special cases.

This paper is organized as follows. In Sec.~\ref{sec: q-exp}, we recall the definitions and properties of $q$-expected classical and quantum correlations in terms of Tsallis $q$-entropy and $q$-expectation value.
In Sec.~\ref{subsec: ccq}, we provide the definition of a ccq state in four-party quantum systems, and its properties related with the $q$-expected correlations.
In Sec.~\ref{subsec: upperlower}, we provide analytic upper and lower bounds of $q$-expected correlations in accordance with the ccq states for $q\geq2$. In Sec.~\ref{sec: poly}, we establish a class of polygamy inequalities of multi-party quantum entanglement in arbitrary dimensions based on $q$-expected entanglement measure $q\geq 1$.
Finally, we summarize our results in Sec.~\ref{Conclusion}.

\section{$q$-Expected Quantum Correlations}
\label{sec: q-exp}

Based on the {\em generalized logarithmic function}
\begin{eqnarray}
\ln _{q} x &=&  \frac {x^{1-q}-1} {1-q},
\label{qlog}
\end{eqnarray}
of the real parameter $q$ with $q\geq0$ and $~q \ne 1$, the {\em Tsallis $q$-entropy} of a
quantum state $\rho$ is defined as~\cite{tsallis,lv}
\begin{align}
S_{q}\left(\rho\right)=-\T \rho ^{q} \ln_{q} \rho.
\label{Qtsallis}
\end{align}
Tsallis $q$-entropy is concave for any nonnegative real parameter $q\geq0$, and it converges to von Neumann entropy as $q$ tends to 1,
\begin{equation}
\lim_{q\rightarrow 1}S_{q}\left(\rho\right)=-\T\rho \ln \rho=:S\left(\rho\right).
\end{equation}

For a quantum state $\rho$ with the spectrum $\{\lambda_i \}_{i}$~\cite{spec},
its Tsallis $q$-entropy can be written as
\begin{align}
S_{q}\left(\rho\right)=-\sum_{i}\lambda_{i}^q \ln _{q}\lambda_i,
\label{Ctsallis}
\end{align}
that is, the $q$-{\em expectation value} of the generalized logarithms. Thus, Tsallis $q$-entropy is a one-parameter generalization of von Neumann entropy based on the concept of $q$-expectation value for nonnegative real parameter $q$.

Using Tsallis $q$-entropy and the concept of $q$-expectation value, a class of bipartite entanglement measures has been introduced; for $q\geq0$ and a bipartite pure state $\ket{\psi}_{AB}$, its $q$-{\em expected entanglement}($q$-E) is defined as~\cite{Kim19}
\begin{equation}
{E}_{q}\left(\ket{\psi}_{AB} \right)=S_{q}(\rho_A),
\label{qEpure}
\end{equation}
where $\rho_A=\T _{B} \ket{\psi}_{AB}\bra{\psi}$ is the reduced density matrix of $\rho_{AB}$ on subsystem $A$.
For a bipartite mixed state $\rho_{AB}$, its $q$-E is defined as the minimum $q$-expectation value
\begin{equation}
E_{q}\left(\rho_{AB} \right)=\min \sum_i p^q_i E_{q}(\ket{\psi_i}_{AB}),
\label{qEmixed}
\end{equation}
over all possible pure state decompositions of $\rho_{AB}$,
\begin{equation}
\rho_{AB}=\sum_{i} p_i |\psi_i\rangle_{AB}\langle\psi_i|.
\label{decomp}
\end{equation}
As a dual quantity to $q$-E, $q$-{\em expected entanglement of assistance}($q$-EOA) was also defined as
\begin{equation}
E^a_{q}\left(\rho_{AB} \right)=\max \sum_i p_i^q E_{q}(\ket{\psi_i}_{AB}),
\label{qEOA}
\end{equation}
where the maximum is taken over all possible pure state
decompositions of $\rho_{AB}$.

Tsallis $q$-entropy converges to von Neumann entropy and the $q$-expectation value becomes ordinary expectation value when $q$ tends to 1, therefore we have
\begin{align}
\lim_{q\rightarrow1}E_{q}\left(\rho_{AB} \right)=&E_{\rm f}\left(\rho_{AB} \right)
\end{align}
and
\begin{align}
\lim_{q\rightarrow1}E^a_{q}\left(\rho_{AB}\right)=&E^a\left(\rho_{AB} \right),
\label{TsallistEOA}
\end{align}
where $E_{\rm f}\left(\rho_{AB}\right)$ is the {\em entanglement of formation}(EOF)~\cite{bdsw}, and
$E^a(\rho_{AB})$ is the entanglement of assistance(EOA) of $\rho_{AB}$~\cite{cohen}.

Let us consider more generalized quantum correlations based on $q$-expectation value and Tsallis $q$-entropy. For $q\geq0$ and a probability ensemble $\mathcal E = \{p_i, \rho_i\}$ of a quantum state $\rho$ (equivalently, a probability decomposition $\rho=\sum_{i}p_i\rho_i$ denoted by $\mathcal E$), its {\em Tsallis-$q$ difference} is defined as~\cite{Kim16T}
\begin{align}
\chi_q\left(\mathcal E\right)=S_q\left(\rho\right)-\sum_{i}p_{i}^q S_q\left(\rho_i\right).
\label{eq: q-diff}
\end{align}
Tsallis-$q$ difference is nonnegative for $q\geq 1$ due to the concavity of Tsallis $q$-entrop, and it converges to the Holevo quantity
\begin{align}
\lim_{q\rightarrow1}\chi_q\left(\mathcal E\right)=S\left(\rho\right)-\sum_{i}p_i S\left(\rho_i\right)=: \chi\left(\mathcal E\right),
\label{eq: holevo}
\end{align}
as $q$ tends to 1. 

For a bipartite quantum state $\rho_{AB}$, each measurement $\{M^x_B\}$ applied on subsystem $B$ induces a probability ensemble $\mathcal E = \{p_x, \rho_A^x\}$ of the reduced density matrix $\rho_A=\T_A\rho_{AB}$ in the way that
\begin{align}
p_x=\T[(I_A\otimes M_B^x)\rho_{AB}]
\end{align}
is the probability of the outcome $x$ and
\begin{align}
\rho^x_A=\T_B[(I_A\otimes {M_B^x})\rho_{AB}]/p_x
\end{align}
is the state of system $A$ when the outcome was $x$.
The {\em one-way classical $q$-correlation}
($q$-CC)~\cite{Kim19} of a bipartite state $\rho_{AB}$ is defined as the maximum Tsallis-$q$ difference
\begin{align}
{\mathcal J}_q^{\leftarrow}(\rho_{AB})&= \max_{\mathcal E} \chi_q\left(\mathcal E\right)
\label{qCC}
\end{align}
over all possible ensemble representations $\mathcal E$ of $\rho_A$ induced by measurements on subsystem $B$.

As a dual quantity to $q$-CC, the {\em one-way unlocalizable $q$-entanglement}($q$-UE)~\cite{Kim16T} is defined by taking the minimum  Tsallis-$q$ difference
\begin{align}
{\mathbf u}E_q^{\leftarrow}(\rho_{AB}) &= \min_{\mathcal E} \chi_q\left(\mathcal E\right),
\label{qUE}
\end{align}
over all probability ensembles $\mathcal E $ of $\rho_A$ induced by {\em rank-1 measurements} on subsystem $B$.
Due to the continuity of Tsallis-$q$ difference with respect to $q$, we have
\begin{align}
\lim_{q\rightarrow1}{\mathcal J}_q^{\leftarrow}(\rho_{AB})=&
{\mathcal J}^{\leftarrow}(\rho_{AB})
\end{align}
and
\begin{align}
\lim_{q\rightarrow1}{\mathbf u}E_q^{\leftarrow}(\rho_{AB})
=&{\mathbf u}E^{\leftarrow}(\rho_{AB}),
\label{contJq}
\end{align}
where ${\mathcal J}^{\leftarrow}(\rho_{AB})$ is the {\em one-way classical correlation}(CC)~\cite{KW} and ${\mathbf u}E^{\leftarrow}(\rho_{AB})$ is {\em one-way unlocalizable entanglement}(UE)~\cite{BGK} of the bipartite state $\rho_{AB}$.

The following proposition shows the trade-off relations between $q$-CC and $q$-E as well as $q$-UE and $q$-EOA distributed in three-party quantum systems.
\begin{Prop}~\cite{Kim19}
For $q \geq 1$ and a three-party pure state $\ket{\psi}_{ABC}$ with its reduced density matrices $\rho_{AB}=\T_C\ket{\psi}_{ABC}\bra{\psi}$,
$\rho_{AC}=\T_B\ket{\psi}_{ABC}\bra{\psi}$ and $\rho_{A}=\T_{BC}\ket{\psi}_{ABC}\bra{\psi}$, we have
\begin{align}
S_q(\rho_A)={\mathcal J}_q^{\leftarrow}(\rho_{AB})+E_q\left(\rho_{AC}\right)
\label{qCCEq}
\end{align}
and
\begin{align}
S_q(\rho_A)={\mathbf u}E_q^{\leftarrow}(\rho_{AB})+E^a_q\left(\rho_{AC}\right).
\label{qUEEqa}
\end{align}
\label{thm: qUEEqa}
\end{Prop}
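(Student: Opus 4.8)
The plan is to adapt the Koashi--Winter argument~\cite{KW} to the $q$-expectation setting. The essential tool is the Hughston--Jozsa--Wootters correspondence applied to $\ket{\psi}_{ABC}$ viewed as a purification of $\rho_{AC}$ with purifying system $B$: any rank-$1$ measurement $\{M^x_B\}$ on $B$ simultaneously (i) induces the ensemble $\mathcal E=\{p_x,\rho^x_A\}$ of $\rho_A$ with $p_x=\T[(I_{AC}\otimes M^x_B)\ket{\psi}_{ABC}\bra{\psi}]$ and $\rho^x_A=\T_{BC}[(I_{AC}\otimes M^x_B)\ket{\psi}_{ABC}\bra{\psi}]/p_x$, and (ii) produces a pure-state decomposition $\rho_{AC}=\sum_x p_x\ket{\phi_x}_{AC}\bra{\phi_x}$; this correspondence exhausts all pure-state decompositions of $\rho_{AC}$, and since measuring $B$ and tracing out $C$ act on disjoint subsystems one has $\rho^x_A=\T_C\ket{\phi_x}_{AC}\bra{\phi_x}$.

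First I would record the pointwise identity $S_q(\rho^x_A)=S_q(\T_C\ket{\phi_x}_{AC}\bra{\phi_x})=E_q(\ket{\phi_x}_{AC})$, which is just the definition~\eqref{qEpure} of $q$-E on the pure state $\ket{\phi_x}_{AC}$. Plugging this into the definition~\eqref{eq: q-diff} of the Tsallis-$q$ difference gives, for the ensemble $\mathcal E$ attached to $\{M^x_B\}$,
\begin{align}
\chi_q(\mathcal E)=S_q(\rho_A)-\sum_x p_x^q\,E_q(\ket{\phi_x}_{AC}).
\label{eq: KWkey}
\end{align}
Equation~\eqref{eq: KWkey} is the crux: optimizing its left-hand side over measurements on $B$ is, term by term, the same as optimizing $\sum_x p_x^q E_q(\ket{\phi_x}_{AC})$ over pure-state decompositions of $\rho_{AC}$, with the sense of the optimization reversed.

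To finish~\eqref{qUEEqa}: $q$-UE in~\eqref{qUE} is by definition the minimum of $\chi_q(\mathcal E)$ over rank-$1$ measurements on $B$, which by~\eqref{eq: KWkey} equals $S_q(\rho_A)$ minus the maximum of $\sum_x p_x^q E_q(\ket{\phi_x}_{AC})$ over pure-state decompositions of $\rho_{AC}$, i.e.\ $S_q(\rho_A)-E^a_q(\rho_{AC})$ by~\eqref{qEOA}. To finish~\eqref{qCCEq}, I first need that the maximum of $\chi_q(\mathcal E)$ in~\eqref{qCC}, a priori over all measurements on $B$, is already attained on rank-$1$ ones: refining a POVM element $M^x_B=\sum_j M^{x,j}_B$ into rank-$1$ pieces turns the summand $p_x^q S_q(\rho^x_A)$ of $\sum_x p_x^q S_q(\rho^x_A)$ into $p_x^q\sum_j r_j^q\,S_q(\rho^{x,j}_A)$ with $r_j\geq0$, $\sum_j r_j=1$ and $\rho^x_A=\sum_j r_j\rho^{x,j}_A$, and since $q\geq1$ forces $r_j^q\leq r_j$ and $S_q\geq0$, concavity of $S_q$ yields $\sum_j r_j^q S_q(\rho^{x,j}_A)\leq\sum_j r_j S_q(\rho^{x,j}_A)\leq S_q(\rho^x_A)$; hence refinement does not decrease $\chi_q(\mathcal E)$ and the supremum is over rank-$1$ measurements. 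Then~\eqref{eq: KWkey} gives $\mathcal J^{\leftarrow}_q(\rho_{AB})=S_q(\rho_A)$ minus the minimum of $\sum_x p_x^q E_q(\ket{\phi_x}_{AC})$ over pure-state decompositions, i.e.\ $S_q(\rho_A)-E_q(\rho_{AC})$ by~\eqref{qEmixed}.

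The main obstacle is the reduction to rank-$1$ measurements in~\eqref{qCCEq}; once that is in hand the rest is a translation through the Hughston--Jozsa--Wootters correspondence. A second point to be handled with care is that pure-state decompositions of $\rho_{AC}$ with arbitrarily many elements must all arise from (possibly overcomplete) rank-$1$ POVMs on $B$, so that neither optimum over decompositions is missed; this is exactly the generalized form of the Hughston--Jozsa--Wootters theorem and needs no hypothesis beyond $\ket{\psi}_{ABC}$ being a purification of $\rho_{AC}$.
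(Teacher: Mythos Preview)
The paper does not prove this proposition; it is quoted from \cite{Kim19} and used as a known trade-off relation. Your argument is the standard Koashi--Winter adaptation one expects in that reference: the Hughston--Jozsa--Wootters correspondence between rank-$1$ POVMs on $B$ and pure-state decompositions of $\rho_{AC}$ turns the identity~\eqref{eq: KWkey} into the two claimed equalities after the appropriate optimization. Your refinement argument for~\eqref{qCCEq} (that coarse-graining cannot increase $\chi_q$ because $r_j^q\leq r_j$ for $q\geq1$ and $S_q$ is concave) is correct and is exactly the step where the hypothesis $q\geq1$ is used. The only cosmetic point is that you might also remark that the \emph{infimum} defining $E_q(\rho_{AC})$ is attained (finite-dimensional compactness of the decomposition set), so that ``minimum'' in~\eqref{qEmixed} is justified and the two optimizations genuinely match; otherwise the argument is complete and matches what the cited proof would contain.
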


The concept of $q$-expectation and Tsallis-$q$ entropy are also used to generalize {\em quantum discord}~\cite{discord}, a different kind of quantum correlation. For $q\geq0$ and a bipartite quantum state $\rho_{AB}$, its {\em Tsallis-$q$ mutual entropy} is defined as
\begin{align}
{\mathcal I}_q\left(\rho_{AB}\right)=S_q\left(\rho_A\right)+S_q\left(\rho_B\right)-
S_q\left(\rho_{AB}\right),
\label{eq: qmutul}
\end{align}
which generalizes the quantum mutual information
\begin{align}
{\mathcal I}\left(\rho_{AB}\right)=&S(\rho_A)+S(\rho_B)-S(\rho_{AB})
\label{mut}
\end{align}
in a way that
\begin{align}
\lim_{q\rightarrow1}{\mathcal I}_q\left(\rho_{AB}\right)={\mathcal I}\left(\rho_{AB}\right).
\label{qmut}
\end{align}

For a bipartite state $\rho_{AB}$, its {\em quantum $q$-discord}($q$-D)\cite{qdiscord} is defined by the difference between its Tsallis-$q$ mutual entropy and $q$-CC,
\begin{align}
\delta_q^{\leftarrow}(\rho_{AB})={\mathcal I}_q\left(\rho_{AB}\right)-{\mathcal J}_q^{\leftarrow}(\rho_{AB}).
\label{qdis}
\end{align}
We note that $q$-D is a generalization of the quantum discord $\delta^{\leftarrow}(\rho_{AB})$ as \begin{align}
\lim_{q\rightarrow1}\delta_q^{\leftarrow}(\rho_{AB})
={\mathcal I}\left(\rho_{AB}\right)-{\mathcal J}^{\leftarrow}(\rho_{AB})=:\delta^{\leftarrow}(\rho_{AB}).
\label{dis}
\end{align}
Moreover, the duality between $q$-CC and $q$-UE provides us with a dual definition to $q$-D,
\begin{align}
{\mathbf u}\delta_q^{\leftarrow}(\rho_{AB})={\mathcal I}_q\left(\rho_{AB}\right)-{\mathbf u}E_q^{\leftarrow}(\rho_{AB}).
\label{qudis}
\end{align}
Eq.~(\ref{qudis}) is referred to as the {\em one-way unlocalizable quantum $q$-discord}($q$-UD) of $\rho_{AB}$~\cite{Kim19},
which is a generalization of {\em one-way unlocalizable quantum discord}(UD)~\cite{XFL12}
\begin{align}
{\mathbf u}\delta^{\leftarrow}(\rho_{AB})={\mathcal I}\left(\rho_{AB}\right)-{\mathbf u}E^{\leftarrow}(\rho_{AB}).
\label{udis}
\end{align}

The following proposition provides a trade-off relation between quantum entanglement($q$-UE) and quantum discord($q$-UD) distributed in three-party quantum systems.
\begin{Prop}~\cite{Kim19}
For $q \geq 1$ and a three-party pure state $\ket{\psi}_{ABC}$ with its reduced density matrices $\rho_{AB}=\T_C\ket{\psi}_{ABC}\bra{\psi}$,
$\rho_{AC}=\T_B\ket{\psi}_{ABC}\bra{\psi}$ and $\rho_{A}=\T_{BC}\ket{\psi}_{ABC}\bra{\psi}$, we have
\begin{align}
S_q(\rho_A)={\mathbf u}\delta_q^{\leftarrow}(\rho_{BA})+{\mathbf u}E_q^{\leftarrow}(\rho_{CA}).
\label{qUEqD}
\end{align}
\label{thm: qUEqD}
\end{Prop}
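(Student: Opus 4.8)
The plan is to reduce the identity~(\ref{qUEqD}) to two relabelled instances of Proposition~\ref{thm: qUEEqa}, together with two elementary facts: the exchange symmetry of $q$-EOA and the spectral identity $S_q(\rho_{AB})=S_q(\rho_C)$ for a global pure state. First I would rewrite the right-hand side using the definition~(\ref{qudis}) of $q$-UD and the symmetry ${\mathcal I}_q(\rho_{BA})={\mathcal I}_q(\rho_{AB})$, which is immediate from~(\ref{eq: qmutul}) and $S_q(\rho_{BA})=S_q(\rho_{AB})$:
\begin{align}
{\mathbf u}\delta_q^{\leftarrow}(\rho_{BA})&+{\mathbf u}E_q^{\leftarrow}(\rho_{CA}) \nonumber \\
&={\mathcal I}_q(\rho_{AB})-{\mathbf u}E_q^{\leftarrow}(\rho_{BA})+{\mathbf u}E_q^{\leftarrow}(\rho_{CA}).
\end{align}

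Next I would apply Eq.~(\ref{qUEEqa}) of Proposition~\ref{thm: qUEEqa} to the same pure state $\ket{\psi}_{ABC}$ twice, with the three parties permuted. Choosing the focus party to be $B$ and the measured party to be $A$ gives $S_q(\rho_B)={\mathbf u}E_q^{\leftarrow}(\rho_{BA})+E^a_q(\rho_{BC})$, and choosing the focus party to be $C$ with the measured party again $A$ gives $S_q(\rho_C)={\mathbf u}E_q^{\leftarrow}(\rho_{CA})+E^a_q(\rho_{CB})$. Subtracting, and using that $q$-EOA is invariant under exchanging its two subsystems --- for any bipartite pure state $\ket{\phi}_{BC}$ one has $E_q(\ket{\phi}_{BC})=S_q(\rho_B)=S_q(\rho_C)$ by the Schmidt decomposition, so the maximizations defining $E^a_q(\rho_{BC})$ and $E^a_q(\rho_{CB})$ coincide --- I obtain $-{\mathbf u}E_q^{\leftarrow}(\rho_{BA})+{\mathbf u}E_q^{\leftarrow}(\rho_{CA})=S_q(\rho_C)-S_q(\rho_B)$. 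Substituting this and the expansion ${\mathcal I}_q(\rho_{AB})=S_q(\rho_A)+S_q(\rho_B)-S_q(\rho_{AB})$ into the display above collapses the right-hand side to $S_q(\rho_A)+S_q(\rho_C)-S_q(\rho_{AB})$.

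To finish, I would invoke the fact that for the global pure state $\ket{\psi}_{ABC}$ the reduced states on the complementary subsystems across the cut $AB\,|\,C$ have the same nonzero spectrum, so that $S_q(\rho_{AB})=S_q(\rho_C)$ for every $q\geq1$; this needs only that $S_q$ in~(\ref{Ctsallis}) depends on the spectrum alone and that a vanishing eigenvalue makes no contribution to it. The last two terms then cancel, leaving $S_q(\rho_A)$, which is exactly~(\ref{qUEqD}).

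I do not expect a genuine obstacle here: once Proposition~\ref{thm: qUEEqa} is in hand the argument is essentially an algebraic rearrangement. The only points that require care are the bookkeeping of which subsystem is measured in each one-way quantity --- i.e. the direction of the arrow $\leftarrow$ --- when the parties are relabelled, and making sure that the exchange symmetry of $q$-EOA and the pure-state identity $S_q(\rho_{AB})=S_q(\rho_C)$ are applied to the intended pairs of subsystems.
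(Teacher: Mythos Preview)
Your argument is correct. The paper itself does not prove Proposition~\ref{thm: qUEqD}; it is quoted from~\cite{Kim19} without proof, so there is no in-paper argument to compare against. Your reduction to two relabelled instances of Eq.~(\ref{qUEEqa}), together with the exchange symmetry of $q$-EOA and the pure-state spectral identity $S_q(\rho_{AB})=S_q(\rho_C)$, is sound and the bookkeeping of the arrow direction is handled correctly.
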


\section{Some properties of $q$-expected quantum correlations}
\label{sec: Prop}
In this section, we first consider a class of four-party classical-classical-quantum(ccq) states and their $q$-expected correlations. By assuming the subadditivity of Tsallis-$q$ mutual entropy for this class of ccq states,
we provide an analytic upper bound for $q$-UE as well as a lower bound for $q$-UD.

\subsection{Classical-Classical-Quantum States}
\label{subsec: ccq}

For a two-qudit state $\rho_{AB}$, let us consider a spectral decomposition of
the reduced density matrix $\rho_B=\T_{A}\rho_{AB}$  such that
\begin{align}
\rho_B=\sum_{i=0}^{d-1}\lambda_{i}\ket{e_i}_B\bra{e_i}.
\label{specrhoB}
\end{align}
Based on the eigenvectors $\{ \ket{e_j }_{B}\}$ of $\rho_B$,
generalized $d$-dimensional Pauli operators can be defined as
\begin{align}
Z=\sum_{j=0}^{d-1}\omega_d^j\ket{e_j}\bra{e_j},~
X=\sum_{j=0}^{d-1} \omega_d^{-j}|\tilde
e_j \rangle \langle \tilde e_j |,
\label{paulis}
\end{align}
where $\omega_d = e^{\frac{2\pi i}{d}}$ is the $d$th-root of unity, and
$\{ |\tilde e_j \rangle _{B} \} $ is the $d$-dimensional {\em Fourier basis},
\begin{equation}
|\tilde e_j \rangle = \frac{1}{\sqrt{d}}\sum_{k=0}^{d-1}
\omega_d^{jk}\ket{e_k},~j=0,\ldots ,d-1,
\label{fourier}
\end{equation}
with respect to the eigenvectors $\{ \ket{e_j }_{B}\}$ of $\rho_B$.
By using the Pauli operators in Eq.~(\ref{paulis}), we define two quantum operations acting on
any $d$-dimensional quantum state $\sigma$ as
\begin{equation}
M_0(\sigma)=\frac{1}{d}\sum_{b=0}^{d-1}Z^b\sigma Z^{-b},~
M_{1}(\sigma)=\frac{1}{d}\sum_{a=0}^{d-1}X^a\sigma X^{-a}.
\label{channels2}
\end{equation}

For the two-qudit state $\rho_{AB}$ whose reduced density matrix is $\rho_B$ in Eq.~(\ref{specrhoB}),
the actions of the channels $M_{0}$ and $M_{1}$ applied on the subsystem $B$ are
\begin{align}
(I_A\otimes M_0)(\rho_{AB})&=\sum_{i=0}^{d-1} \sigma_A^i \otimes \lambda_i\ket{e_i}_B\bra{e_i}
\label{cact1}
\end{align}
and
\begin{align}
(I_A\otimes M_1)(\rho_{AB})&=\sum_{j=0}^{d-1} \tau_A^j \otimes
\frac{1}{d}|\tilde e_j \rangle_B \langle \tilde e_j|,
\label{cact2}
\end{align}
where $\lambda_i\sigma_A^i=\T_B [(I_A \otimes
\ket{e_i}_B\bra{e_i})\rho_{AB}]$ and $\tau_A^j/d=\T_B[(I_A
\otimes |\tilde e_j\rangle_B \langle\tilde e_j|)\rho_{AB}]$ for $i,~j \in \{0,\cdots, d-1\}$.
Thus the ensembles of subsystem $A$ induced by the action of the channels $M_0$ and $M_1$ from Eqs.~(\ref{cact1}) and
(\ref{cact2}) are
\begin{align}
\mathcal E_0=\{\lambda_i,\sigma_A^i\}_i,~
\mathcal E_1=\{\frac{1}{d},\tau_A^j\}_j,
\label{ensembles}
\end{align}
respectively. Equivalently, we can say that each of the rank-1 measurements $\{ \ket{e_i}_B\bra{e_i}\}$ and $\{ |\tilde e_j\rangle_B \langle\tilde e_j|\}$ on subsystem $B$ of $\rho_{AB}$ induces the ensembles $\mathcal E_0$ and $\mathcal E_1$ of subsystem $A$, respectively.

Now, let us consider a four-qudit ccq-state $\Omega_{XYAB}$
\begin{align}
  \Omega_{XYAB}=\frac 1{d^2}\sum_{x,y=0}^{d-1}&\ket{x}_X
  \bra{x}\otimes\ket{y}_Y\bra{y}\nonumber\\
  &\otimes(I_A\otimes X^x_BZ^y_B)\rho_{AB}(I_A\otimes
  Z^{-y}_BX^{-x}_B),
\label{XYAB}
\end{align}
with the reduced density matrices
\begin{align}
\Omega_{XAB}=\frac
1{d}\sum_{x=0}^{d-1}&\ket{x}_X\bra{x}\otimes X^x_B
\left(\sum_{i=0}^{d-1} \sigma_A^i \otimes \lambda_i\ket{e_i}_B\bra{e_i}\right)X_B^{-x}
\label{XAB}
\end{align}
and
\begin{align}
\Omega_{YAB}=&\frac
1{d}\sum_{y=0}^{d-1}\ket{y}_Y\bra{y}\otimes
Z_B^y\left(\sum_{j=0}^{d-1} \tau_A^j \otimes
\frac{1}{d}|\tilde e_j \rangle_B \langle \tilde
e_j|\right)Z_B^{-y}. \label{YAB}
\end{align}
It is straightforward to verify that the Tsallis-$q$ mutual entropies of $\Omega_{XYAB}$,
$\Omega_{XAB}$ and $\Omega_{YAB}$ in Eqs.~(\ref{XYAB}), (\ref{XAB}) and (\ref{YAB}) are~\cite{Kim16T}
\begin{align}
{\mathcal I}_q\left(\Omega_{XY:AB}\right)=&\frac{d^{1-q}-1}{1-q}+d^{1-q}S_q\left(\rho_A\right)
%\nonumber\\
-d^{2(1-q)}S_q\left(\rho_{AB}\right),
\label{IqXYAB2}
\end{align}
\begin{align}
{\mathcal I}_q\left(\Omega_{X:AB}\right)
=&\frac{d^{1-q}-1}{1-q}-d^{1-q}S_q\left(\rho_B\right)+d^{1-q}\chi_q(\mathcal E_0)
\label{IqXAB2}
\end{align}
and
\begin{align}
{\mathcal I}_q\left(\Omega_{Y:AB}\right)=(1-d^{1-q})\frac{d^{1-q}-1}{1-q}+d^{1-q}\chi_q(\mathcal E_1).
\label{IqYAB2}
\end{align}

\subsection{Upper and Lower Bounds}
\label{subsec: upperlower}

\begin{Thm}
For $q\geq \log_d \left(\frac{1+\sqrt{5}}{2}\right)+1$ and any two-qudit state $\rho_{AB}$, we have
\begin{align}
{\mathbf u}E_q^{\leftarrow}(\rho_{AB})\leq \frac{{\mathcal I}_q\left(\rho_{AB}\right)}{2}
\label{uEup}
\end{align}
and
\begin{align}
{\mathbf u}\delta_q^{\leftarrow}(\rho_{AB})\geq \frac{{\mathcal I}_q\left(\rho_{AB}\right)}{2}
\label{uDlow}
\end{align}
conditioned on the subadditivity of Tsallis-$q$ mutual entropy for the ccq states in Eq.~(\ref{XYAB}), that is,
\begin{align}
{\mathcal I}_q\left(\Omega_{XY:AB}\right)\geq {\mathcal I}_q\left(\Omega_{X:AB}\right)+{\mathcal I}_q\left(\Omega_{Y:AB}\right).
\label{subadd1}
\end{align}
\label{thm: uEup}
\end{Thm}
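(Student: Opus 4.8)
The plan is to substitute the three explicit expressions of Eqs.~(\ref{IqXYAB2}), (\ref{IqXAB2}) and~(\ref{IqYAB2}) into the assumed subadditivity~(\ref{subadd1}) and read off the resulting constraint on $\chi_q(\mathcal E_0)+\chi_q(\mathcal E_1)$. Writing $t=d^{1-q}$, so that $0<t\leq1$ for $q\geq1$ and $\tfrac{d^{1-q}-1}{1-q}=\ln_q d$, the three $\tfrac{d^{1-q}-1}{1-q}$ contributions combine into a single $-(1-t)\ln_q d$, the terms $S_q(\rho_A)$ and $S_q(\rho_B)$ survive with common weight $t$, the term $S_q(\rho_{AB})$ appears with weight $t^2=d^{2(1-q)}$, and after dividing through by $t>0$ and using ${\mathcal I}_q(\rho_{AB})=S_q(\rho_A)+S_q(\rho_B)-S_q(\rho_{AB})$ one arrives at
\begin{align}
\chi_q(\mathcal E_0)+\chi_q(\mathcal E_1)\leq{\mathcal I}_q(\rho_{AB})+(1-t)\Big[S_q(\rho_{AB})-\tfrac{\ln_q d}{t}\Big].
\label{eq: pp1}
\end{align}

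The next step is to show that the bracketed correction in~(\ref{eq: pp1}) is nonpositive precisely under the stated bound on $q$. Since $\rho_{AB}$ acts on a $d^2$-dimensional space, the concavity of $S_q$ (its maximality at the maximally mixed state) gives $S_q(\rho_{AB})\leq\ln_q(d^2)$, and a one-line computation yields $\ln_q(d^2)=(1+t)\ln_q d$. Hence
\begin{align}
S_q(\rho_{AB})-\tfrac{\ln_q d}{t}\leq(1+t)\ln_q d-\tfrac{\ln_q d}{t}=\tfrac{\ln_q d}{t}\big(t^2+t-1\big),
\end{align}
and since $\ln_q d>0$ and $1-t\geq0$ for $q>1$ and $d\geq2$, this correction is $\leq0$ as soon as $t^2+t-1\leq0$, i.e. $t\leq\tfrac{\sqrt{5}-1}{2}$. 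As $t=d^{-(q-1)}$, that is equivalent to $d^{q-1}\geq\tfrac{1+\sqrt{5}}{2}$, i.e. to $q\geq\log_d\!\big(\tfrac{1+\sqrt{5}}{2}\big)+1$, which is exactly the hypothesis; under it,
\begin{align}
\chi_q(\mathcal E_0)+\chi_q(\mathcal E_1)\leq{\mathcal I}_q(\rho_{AB}).
\label{eq: pp2}
\end{align}

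To finish, I would invoke the definition of $q$-UE in Eq.~(\ref{qUE}): the ensembles $\mathcal E_0$ and $\mathcal E_1$ are precisely those induced on subsystem $A$ by the rank-$1$ measurements $\{\ket{e_i}_B\bra{e_i}\}$ and $\{\ket{\tilde e_j}_B\bra{\tilde e_j}\}$ on $B$, so each obeys $\chi_q(\mathcal E_0)\geq{\mathbf u}E_q^{\leftarrow}(\rho_{AB})$ and $\chi_q(\mathcal E_1)\geq{\mathbf u}E_q^{\leftarrow}(\rho_{AB})$. Adding these two bounds and comparing with~(\ref{eq: pp2}) gives $2\,{\mathbf u}E_q^{\leftarrow}(\rho_{AB})\leq{\mathcal I}_q(\rho_{AB})$, which is~(\ref{uEup}); inequality~(\ref{uDlow}) then follows immediately from the definition~(\ref{qudis}), since ${\mathbf u}\delta_q^{\leftarrow}(\rho_{AB})={\mathcal I}_q(\rho_{AB})-{\mathbf u}E_q^{\leftarrow}(\rho_{AB})\geq{\mathcal I}_q(\rho_{AB})-\tfrac12{\mathcal I}_q(\rho_{AB})$.

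The one delicate point is the bookkeeping in the first step—tracking the signs of $1-t$ and $\ln_q d$ and the $t$ versus $t^2$ weighting on $S_q(\rho_{AB})$—together with verifying that the crude maximal-entropy bound on $S_q(\rho_{AB})$ is still sharp enough to bring the threshold down exactly to the golden ratio $\tfrac{1+\sqrt{5}}{2}$; beyond that, the argument is purely definitional once the subadditivity~(\ref{subadd1}) is granted.
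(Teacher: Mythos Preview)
Your proof is correct and follows essentially the same route as the paper's own argument: substitute the explicit ccq mutual-entropy formulas into the assumed subadditivity to bound $\chi_q(\mathcal E_0)+\chi_q(\mathcal E_1)$, invoke the rank-$1$ measurement characterization of ${\mathbf u}E_q^{\leftarrow}$ to pass to $2\,{\mathbf u}E_q^{\leftarrow}(\rho_{AB})$, and then kill the residual correction term via the maximal-entropy bound $S_q(\rho_{AB})\leq\ln_q(d^2)$, which yields exactly the golden-ratio threshold on $q$. The only cosmetic difference is your use of the shorthand $t=d^{1-q}$ and the identity $\ln_q(d^2)=(1+t)\ln_q d$, which tidies the bookkeeping but does not alter the structure of the argument.
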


\begin{proof}
For a two-qudit state $\rho_{AB}$ and its four-party ccq state defined in Eq.~(\ref{XYAB}),
Eqs.~(\ref{IqXYAB2}), (\ref{IqXAB2}) and (\ref{IqYAB2}) enable us to rewrite Inequality~(\ref{subadd1}) as
\begin{align}
\chi_q(\mathcal E_0)+\chi_q(\mathcal E_1)\leq& S_q\left(\rho_A\right)+S_q\left(\rho_B\right)\nonumber\\
&-d^{1-q}S_q\left(\rho_{AB}\right)+\frac{\left(d^{1-q}-1\right)^2}{d^{1-q}(1-q)}.
\label{ensemine1}
\end{align}

Now we note that the Tsallis-$q$ differences $\chi_q(\mathcal E_0)$ and $\chi_q(\mathcal E_1)$ for the ensembles $\mathcal E_0$ and $\mathcal E_1$ in Eq.~(\ref{ensembles}) can be obtained from $\rho_{AB}$ by measuring its subsystem $B$ with respect to
the rank-1 measurements $\{\ket{e_i}_B\bra{e_i} \}_i$
and $\{ |\tilde e_j \rangle_B \langle \tilde e_j| \}_j$, respectively.
From the definition of $q$-UE in Eq.~(\ref{qUE}), we also note that each of rank-1 measurement $\{\ket{e_i}_B\bra{e_i} \}_i$
and $\{ |\tilde e_j \rangle_B \langle \tilde e_j| \}_j$ provides an upperbound of $q$-UE as
\begin{align}
{\mathbf u}E_q^{\leftarrow}(\rho_{AB})\leq \chi_q(\mathcal E_0)
\label{tuupper1a}
\end{align}
and
\begin{align}
{\mathbf u}E_q^{\leftarrow}(\rho_{AB})\leq \chi_q(\mathcal E_1).
\label{tuupper1b}
\end{align}

Because Inequalities~(\ref{tuupper1a}) and (\ref{tuupper1b}) imply
\begin{align}
{\mathbf u}E_q^{\leftarrow}(\rho_{AB})\leq \frac{\chi_q(\mathcal E_0)+\chi_q(\mathcal E_1)}{2},
\label{tuupper1}
\end{align}
Inequalities~(\ref{ensemine1}) and (\ref{tuupper1}) enable us to have
\begin{align}
{\mathbf u}E_q^{\leftarrow}(\rho_{AB})\leq& \frac{{\mathcal I}_q\left(\rho_{AB}\right)}{2}\nonumber\\
&+\frac{1}{2}\left[(1-d^{1-q})S_q\left(\rho_{AB}\right)+
\frac{\left(d^{1-q}-1\right)^2}{d^{1-q}(1-q)}\right].
\label{upperUEAB}
\end{align}
To prove Inequality~(\ref{uEup}), it is now sufficient to show that
\begin{align}
(1-d^{1-q})S_q\left(\rho_{AB}\right)+\frac{\left(d^{1-q}-1\right)^2}{d^{1-q}(1-q)}\leq 0.
\label{nega1}
\end{align}

For $q\geq 1$, we have $1-d^{1-q}\geq0$. Moreover, the Tsallis $q$-entropy attains its maximal value for the maximally mixed state,
\begin{align}
S_q\left(\rho_{AB}\right)\leq S_q\left(\frac{I_{AB}}{d^2}\right)=\frac{1-d^{2(1-q)}}{q-1},
\label{Tsalmax}
\end{align}
therefore we have
\begin{align}
(1-d^{1-q})&S_q\left(\rho_{AB}\right)+\frac{\left(d^{1-q}-1\right)^2}{d^{1-q}(1-q)}\nonumber\\
&\leq \frac{(1-d^{1-q})^2}{q-1}\left[1+d^{1-q}-d^{q-1}\right].
\label{nega2}
\end{align}

The non-positivity of the right-hand side of Inequality~(\ref{nega2}) is equivalent to
\begin{align}
1+d^{1-q}-d^{q-1}\leq 0,
\label{nonposi1}
\end{align}
which can be rewritten as
\begin{align}
q\geq \log_d \left(\frac{1+\sqrt{5}}{2}\right)+1
\label{nonposi2}
\end{align}
for nonnegative $q$.

Inequality~(\ref{uDlow}) is then a one step consequence of Inequality~(\ref{uEup}) together with the definition of $q$-UD in Eq.~(\ref{qudis}).
\end{proof}

For $q=1$, Inequalities~(\ref{uEup}) and (\ref{uDlow}) are reduced to
\begin{align}
{\mathbf u}E^{\leftarrow}(\rho_{AB})&\leq \frac{{\mathcal I}\left(\rho_{AB}\right)}{2},~~
{\mathbf u}\delta^{\leftarrow}(\rho_{AB})\geq \frac{{\mathcal I}\left(\rho_{AB}\right)}{2},
\label{Eup}
\end{align}
respectively, whereas the condition in~(\ref{subadd1}) is reduced to the subadditivity of quantum mutual information
\begin{align}
{\mathcal I}\left(\Omega_{XY:AB}\right)\geq {\mathcal I}\left(\Omega_{X:AB}\right)+{\mathcal I}\left(\Omega_{Y:AB}\right).
\label{mutsubadd}
\end{align}
In fact, Inequality~(\ref{mutsubadd}) was shown to be true for any ccq state in general~\cite{Kim16T}. Moreover, Inequalities~(\ref{Eup}) are also shown to be true for any quantum state $\rho_{AB}$~\cite{BGK, XFL12}.
Thus Theorem~\ref{thm: uEup} is true for $q=1$ without any condition.

The lower bound(the right-hand side) of Inequality~(\ref{nonposi2}) tends to $1$ as $d$ is getting large.
Thus Theorem~\ref{thm: uEup} is true for the most range of $q\geq 1$ if $d$ is large enough, that is, large dimensional quantum systems.
Although the proof method that we used here is not sufficient to guarantee the validity of
Theorem~\ref{thm: uEup} for $1<q<\log_d \left(\frac{1+\sqrt{5}}{2}\right)$, we conjecture that Theorem~\ref{thm: uEup} is true for any $q$ larger than or equal to $1$.

We also note that any bipartite quantum state can be considered as a two-qudit state where $d$ is the dimension of larger dimensional subsystem.
Moreover, we also have $\log_d \left(\frac{1+\sqrt{5}}{2}\right)\leq1$ for any $ d \geq 2$, Thus we have the following corollary.
\begin{Cor}
For $q\geq 2$ and any bipartite quantum state $\sigma_{AB}$, we have
\begin{align}
{\mathbf u}E_q^{\leftarrow}(\sigma_{AB})\leq \frac{{\mathcal I}_q\left(\sigma_{AB}\right)}{2}
\label{uEup2}
\end{align}
and
\begin{align}
{\mathbf u}\delta_q^{\leftarrow}(\sigma_{AB})\geq \frac{{\mathcal I}_q\left(\sigma_{AB}\right)}{2}
\label{uDlow2}
\end{align}
conditioned on the subadditivity of Tsallis-$q$ mutual entropy for the ccq state in terms of $\sigma_{AB}$.
\label{Cor: uEup}
\end{Cor}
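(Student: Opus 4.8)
The plan is to derive Corollary~\ref{Cor: uEup} directly from Theorem~\ref{thm: uEup} by an embedding argument. First I would observe that any bipartite state $\sigma_{AB}$ on $\mathcal H_A\otimes\mathcal H_B$ can be regarded as a two-qudit state: set $d=\max\{\dim\mathcal H_A,\dim\mathcal H_B\}$ and embed each of $\mathcal H_A$ and $\mathcal H_B$ isometrically into $\mathbb C^d$ by appending zero vectors to an orthonormal basis. This embedding leaves every relevant quantity unchanged: the marginals $\sigma_A$, $\sigma_B$ and the joint state $\sigma_{AB}$ keep the same spectra, hence the same Tsallis $q$-entropies and the same Tsallis-$q$ mutual entropy ${\mathcal I}_q(\sigma_{AB})$; a rank-1 measurement on the embedded $B$ corresponds to a rank-1 measurement on the original $B$ together with extra outcomes of zero probability, which contribute nothing to the Tsallis-$q$ difference, so ${\mathbf u}E_q^{\leftarrow}$ and ${\mathbf u}\delta_q^{\leftarrow}$ are preserved as well. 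Thus it suffices to prove the two inequalities for $\sigma_{AB}$ viewed as a genuine two-qudit state in $\mathbb C^d\otimes\mathbb C^d$.

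Second, I would check that the hypothesis of Theorem~\ref{thm: uEup} holds. Since $\frac{1+\sqrt5}{2}<2\leq d$, we have $\log_d\!\left(\frac{1+\sqrt5}{2}\right)\leq1$, and therefore
\begin{align}
\log_d\!\left(\frac{1+\sqrt5}{2}\right)+1\leq 2\leq q,
\end{align}
so the condition $q\geq\log_d\!\left(\frac{1+\sqrt5}{2}\right)+1$ required by Theorem~\ref{thm: uEup} is automatically satisfied for every $d\geq2$ once $q\geq2$. Applying Theorem~\ref{thm: uEup} to the two-qudit state $\sigma_{AB}$, and invoking the assumed subadditivity of Tsallis-$q$ mutual entropy for the ccq state in Eq.~(\ref{XYAB}) built from $\sigma_{AB}$, yields Inequality~(\ref{uEup2}). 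Finally, Inequality~(\ref{uDlow2}) follows in one step from Inequality~(\ref{uEup2}) and the definition of $q$-UD in Eq.~(\ref{qudis}), exactly as in the last line of the proof of Theorem~\ref{thm: uEup}: ${\mathbf u}\delta_q^{\leftarrow}(\sigma_{AB})={\mathcal I}_q(\sigma_{AB})-{\mathbf u}E_q^{\leftarrow}(\sigma_{AB})\geq {\mathcal I}_q(\sigma_{AB})/2$.

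The only point requiring a little care, and the main (mild) obstacle, is the first step: confirming that padding the smaller subsystem with zeros to obtain a bona fide two-qudit state genuinely preserves ${\mathbf u}E_q^{\leftarrow}$, ${\mathbf u}\delta_q^{\leftarrow}$, and the ccq-state subadditivity condition. This is conceptually routine, since the padding introduces only zero-probability branches in the relevant ensembles, but it should be stated explicitly so that the reduction to Theorem~\ref{thm: uEup} is airtight.
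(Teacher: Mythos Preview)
Your proposal is correct and follows essentially the same route as the paper: the paper simply remarks that any bipartite state can be regarded as a two-qudit state with $d$ the larger subsystem dimension, and that $\log_d\!\left(\frac{1+\sqrt5}{2}\right)\leq 1$ for all $d\geq 2$, so $q\geq 2$ places us in the range of Theorem~\ref{thm: uEup}. Your added care in verifying that the zero-padding embedding preserves ${\mathbf u}E_q^{\leftarrow}$, ${\mathbf u}\delta_q^{\leftarrow}$, ${\mathcal I}_q$, and the ccq subadditivity condition is more explicit than the paper, but the underlying argument is identical.
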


\section{Polygamy of $q$-expected entanglement in multi-party quantum systems}
\label{sec: poly}
In this section, we provide the polygamy inequalities of $q$-expected quantum entanglement distributed in
multi-party quantum systems for $q\geq 1$ conditioned on the subadditivity of Tsallis-$q$ mutual entropy for
ccq states. The following theorem shows the polygamy inequality of $q$-EOA in three-party quantum systems.

\begin{Thm}
For $q\geq 1$, and any three-party pure state $\ket{\psi}_{ABC}$
with its two-party reduced density matrices $\T_C \ket{\psi}_{ABC}\bra{\psi}_{ABC}=\rho_{AB}$ and $\T_B \ket{\psi}_{ABC}\bra{\psi}_{ABC}=\rho_{AC}$,
we have
\begin{align}
E_{q}\left(\ket{\psi}_{A(BC)}\right)
\leq& E^a_{q}\left(\rho_{AB}\right)+E^a_{q}\left(\rho_{AC}\right),
\label{Tqpoly3}
\end{align}
conditioned on the subadditivity of Tsallis-$q$ mutual entropy for the ccq states in Eq.~(\ref{XYAB}),
where $E_{q}\left(\ket{\psi}_{A(BC)}\right)$ is the $q$-E of the pure state $\ket{\psi}_{ABC}$ with respect to the bipartition between $A$ and $BC$
\label{thm: Tqpoly3}
\end{Thm}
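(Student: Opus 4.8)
The plan is to reduce the desired polygamy inequality to the upper bound on $q$-UE established in Theorem~\ref{thm: uEup}, by exploiting the trade-off identities of Proposition~\ref{thm: qUEEqa}. First I would note that for the pure state $\ket{\psi}_{ABC}$ with reduced state $\rho_A=\T_{BC}\ket{\psi}_{ABC}\bra{\psi}$, the left-hand side is simply $E_{q}\left(\ket{\psi}_{A(BC)}\right)=S_q(\rho_A)$ by the definition~(\ref{qEpure}). So the goal becomes
\begin{align}
S_q(\rho_A)\leq E^a_{q}\left(\rho_{AB}\right)+E^a_{q}\left(\rho_{AC}\right).
\label{plangoal}
\end{align}

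Next I would apply the identity~(\ref{qUEEqa}) from Proposition~\ref{thm: qUEEqa} in two ways, once tracing out $C$ and once tracing out $B$: from the first, $S_q(\rho_A)={\mathbf u}E_q^{\leftarrow}(\rho_{AB})+E^a_q(\rho_{AC})$, and from the second (relabeling $B\leftrightarrow C$), $S_q(\rho_A)={\mathbf u}E_q^{\leftarrow}(\rho_{AC})+E^a_q(\rho_{AB})$. Adding these two identities gives $2S_q(\rho_A)={\mathbf u}E_q^{\leftarrow}(\rho_{AB})+{\mathbf u}E_q^{\leftarrow}(\rho_{AC})+E^a_q(\rho_{AB})+E^a_q(\rho_{AC})$. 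Now I would invoke Theorem~\ref{thm: uEup}, applied separately to $\rho_{AB}$ and $\rho_{AC}$ (each is a bipartite state, viewed as a two-qudit state with $d$ the dimension of the larger subsystem), under the stated ccq-subadditivity hypothesis, to get ${\mathbf u}E_q^{\leftarrow}(\rho_{AB})\leq {\mathcal I}_q(\rho_{AB})/2$ and ${\mathbf u}E_q^{\leftarrow}(\rho_{AC})\leq {\mathcal I}_q(\rho_{AC})/2$. Substituting and rearranging yields
\begin{align}
2S_q(\rho_A)\leq \frac{{\mathcal I}_q(\rho_{AB})+{\mathcal I}_q(\rho_{AC})}{2}+E^a_q(\rho_{AB})+E^a_q(\rho_{AC}).
\label{planintermediate}
\end{align}

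To close the argument I still need to absorb the mutual-entropy terms, i.e. to show ${\mathcal I}_q(\rho_{AB})/2\leq S_q(\rho_A)-{\mathbf u}E_q^{\leftarrow}(\rho_{AB})+\dots$, which by~(\ref{qUEqD}) relates to $q$-UD; indeed Proposition~\ref{thm: qUEqD} gives $S_q(\rho_A)={\mathbf u}\delta_q^{\leftarrow}(\rho_{BA})+{\mathbf u}E_q^{\leftarrow}(\rho_{CA})$, and combined with the definition~(\ref{qudis}) of $q$-UD one recovers relations among ${\mathcal I}_q$, ${\mathbf u}E_q^{\leftarrow}$ on the two bipartitions. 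The cleanest route is probably to bypass~(\ref{planintermediate}) and instead, after obtaining the two identities from Proposition~\ref{thm: qUEEqa}, directly bound one of the $q$-UE terms: using ${\mathbf u}E_q^{\leftarrow}(\rho_{AB})\leq E^a_q(\rho_{AB})$ would suffice if such an inequality holds, but more robustly I would use~(\ref{uEup}) together with the fact that ${\mathcal I}_q(\rho_{AB})\leq 2 S_q(\rho_A)$ does not obviously hold, so care is needed. The main obstacle will be exactly this final bookkeeping step — converting the half-mutual-entropy terms into the $q$-EOA terms on the right-hand side — and verifying that the dimensional hypothesis of Theorem~\ref{thm: uEup} (or its Corollary~\ref{Cor: uEup} for $q\geq 2$) is met for the reduced states $\rho_{AB}$ and $\rho_{AC}$ under the blanket ccq-subadditivity assumption; I expect the slick resolution is that adding the two trade-off identities and applying~(\ref{uEup}) once to each term gives $2S_q(\rho_A)\leq E^a_q(\rho_{AB})+E^a_q(\rho_{AC})+{\mathbf u}\delta$-type leftover that Proposition~\ref{thm: qUEqD} identifies as $S_q(\rho_A)$ again, yielding~(\ref{plangoal}) after cancellation.
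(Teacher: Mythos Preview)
Your overall strategy---add the two trade-off identities from Proposition~\ref{thm: qUEEqa} and then bound the $q$-UE terms---is exactly the paper's, but there is a genuine gap in how you execute the bounding step.

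You propose to invoke Theorem~\ref{thm: uEup} directly, i.e.\ ${\mathbf u}E_q^{\leftarrow}(\rho_{AB})\leq {\mathcal I}_q(\rho_{AB})/2$. If that were legitimate, the finish would actually be immediate and your detour through Proposition~\ref{thm: qUEqD} is unnecessary: for a tripartite pure state one has $S_q(\rho_{AB})=S_q(\rho_C)$ and $S_q(\rho_{AC})=S_q(\rho_B)$, whence ${\mathcal I}_q(\rho_{AB})+{\mathcal I}_q(\rho_{AC})=2S_q(\rho_A)$ exactly, and your inequality~(\ref{planintermediate}) collapses to~(\ref{plangoal}). The problem is that Theorem~\ref{thm: uEup} is only established for $q\geq \log_d\!\big(\tfrac{1+\sqrt{5}}{2}\big)+1$, not for all $q\geq 1$; for $1<q$ below that threshold the step from the intermediate bound~(\ref{upperUEAB}) to~(\ref{uEup}) fails because the correction term in~(\ref{nega1}) can be positive. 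So your argument, as written, only proves the theorem on a restricted $q$-range.

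The paper avoids this by \emph{not} using the final form~(\ref{uEup}) of Theorem~\ref{thm: uEup}. Instead it goes back to the intermediate bound~(\ref{upperUEAB}) (equivalently~(\ref{upperUEABsim})), which holds for all $q\geq 1$ under the ccq-subadditivity hypothesis alone, applies it to both $\rho_{AB}$ and $\rho_{AC}$, and then uses the purity relations $S_q(\rho_{AB})=S_q(\rho_C)$, $S_q(\rho_{AC})=S_q(\rho_B)$. After cancellation one is left with $S_q(\rho_A)+\tfrac{1}{2}(\Xi_B+\Xi_C)\leq E^a_q(\rho_{AB})+E^a_q(\rho_{AC})$, where $\Xi_B,\Xi_C$ involve the Tsallis entropies of the \emph{single} qudits $\rho_B,\rho_C$. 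The key point is that $S_q(\rho_B)\leq (d^{q-1}-1)/(q-1)$, a much looser bound than the two-qudit maximum used in~(\ref{Tsalmax}), and this suffices to make $\Xi_B,\Xi_C\geq 0$ for every $q\geq 1$. That is the missing idea you need.
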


\begin{proof}
For a three-party quantum state $\ket{\psi}_{ABC}$, the universality of Eq.~(\ref{qUEEqa}) of Proposition~\ref{thm: qUEEqa}
leads us to
\begin{align}
S_q(\rho_A)-{\mathbf u}E_q^{\leftarrow}(\rho_{AC})=E^a_q\left(\rho_{AB}\right)
\end{align}
and
\begin{align}
S_q(\rho_A)-{\mathbf u}E_q^{\leftarrow}(\rho_{AB})=E^a_q\left(\rho_{AC}\right),
\label{qUEEqa23}
\end{align}
therefore we have
\begin{align}
2S_q(\rho_A)-&({\mathbf u}E_q^{\leftarrow}(\rho_{AB})+{\mathbf u}E_q^{\leftarrow}(\rho_{AC}))\nonumber\\
&=E^a_q\left(\rho_{AB}\right)+E^a_q\left(\rho_{AC}\right).
\label{qEqUEineq1}
\end{align}
We also note that $\ket{\psi}_{ABC}$ can be assumed to be a three-qudit state,
otherwise, we can always consider an imbedded image of $\ket{\psi}_{ABC}$ into a higher dimensional quantum system having the same dimensions of subsystems.

As we have already seen in the proof of Theorem~\ref{thm: uEup}, the subadditivity condition for ccq states leads us to Inequality~(\ref{ensemine1}),
and this enables us to have an upper bound of ${\mathbf u}E_q^{\leftarrow}(\rho_{AB})$ as in Inequality~(\ref{upperUEAB}), which can be rewritten as
\begin{align}
{\mathbf u}E_q^{\leftarrow}(\rho_{AB})\leq& \frac{1}{2}[S_q\left(\rho_A\right)+S_q\left(\rho_B\right)\nonumber\\
&-d^{1-q}S_q\left(\rho_{AB}\right)+\frac{\left(d^{1-q}-1\right)^2}{d^{1-q}(1-q)}].
\label{upperUEABsim}
\end{align}
For the two-qudit reduced density matrix $\rho_{AC}$, we also analogously have
\begin{align}
{\mathbf u}E_q^{\leftarrow}(\rho_{AC})\leq& \frac{1}{2}[S_q\left(\rho_A\right)+S_q\left(\rho_C\right)\nonumber\\
&-d^{1-q}S_q\left(\rho_{AC}\right)+\frac{\left(d^{1-q}-1\right)^2}{d^{1-q}(1-q)}].
\label{upperUEACsim}
\end{align}

From Inequality~(\ref{qEqUEineq1}) together with inequalities~(\ref{upperUEABsim}) and (\ref{upperUEACsim}), we have
\begin{align}
S_q(\rho_A)
+\frac{1}{2}\left(\Xi_B+\Xi_C\right) \leq E^a_q\left(\rho_{AB}\right)+E^a_q\left(\rho_{AC}\right)
\label{qEqUEineq2}
\end{align}
where
\begin{align}
\Xi_B=\frac{d^{q-1}-1}{d^{q-1}}\left[\frac{d^{q-1}-1}{q-1}-S_q\left(\rho_B \right)\right]
\label{XiB}
\end{align}
and
\begin{align}
\Xi_C=\frac{d^{q-1}-1}{d^{q-1}}\left[\frac{d^{q-1}-1}{q-1}-S_q\left(\rho_C \right)\right].
\label{XiC}
\end{align}

For $q\geq 1$, we have $\frac{d^{q-1}-1}{d^{q-1}}\geq0$. Moreover,the Tsallis $q$-entropy attains its maximal value for the maximally mixed states,
\begin{align}
S_q\left(\rho_{B}\right)\leq S_q\left(\frac{I_{B}}{d}\right)=\frac{1-d^{(1-q)}}{q-1}\leq \frac{d^{(q-1)}-1}{q-1},
\label{Tsalmax2}
\end{align}
and this implies the nonnegativity of $\Xi_B$. The nonnegativity of $\Xi_C$ can be analogously obtained, therefore
\begin{align}
\Xi_B\geq0,~~\Xi_C\geq0.
\label{noneg12}
\end{align}
Because $E_{q}\left(\ket{\psi}_{A(BC)}\right)=S_q(\rho_A)$, Inequality~(\ref{qEqUEineq2}) together with the inequalities in (\ref{noneg12}) implies Inequality~(\ref{Tqpoly3}), which completes the proof.
\end{proof}

When $q$ tends to $1$, $q$-EOA is reduced to EOA as in Eq.~(\ref{TsallistEOA}), whereas the subadditivity of quantum mutual information in
Inequality~(\ref{mutsubadd}) was shown to be true for any ccq state in general~\cite{Kim16T}.
Thus Theorem~\ref{thm: Tqpoly3} encapsulates the results of general polygamy inequality of three-party entanglement in terms of EOA~\cite{BGK}.

For any three-party quantum state $\ket{\psi}_{ABC}$, it was recently shown that the polygamy inequality of $q$-EOA in (\ref{Tqpoly3}) is a necessary and sufficient condition for the monogamy inequality of $q$-UE as well as the polygamy inequality of $q$-UD for $q\geq 1$~\cite{Kim19}. Thus we have the following corollary.
\begin{Cor}
For $q\geq 1$, and any three-party pure state $\ket{\psi}_{ABC}$, we have
\begin{align}
{\mathbf u}E_q^{\leftarrow}(\ket{\psi}_{A(BC)})\geq {\mathbf u}E_q^{\leftarrow}(\rho_{AB})+{\mathbf u}E_q^{\leftarrow}(\rho_{AC}),
\label{UEmono3}
\end{align}
and
\begin{align}
{\mathbf u}\delta_q^{\leftarrow}\left(\ket{\psi}_{A(BC)}\right)
\leq& {\mathbf u}\delta_q^{\leftarrow}(\rho_{AB})+{\mathbf u}\delta_q^{\leftarrow}(\rho_{AC}),
\label{Dqpoly3}
\end{align}
conditioned on the subadditivity of Tsallis-$q$ mutual entropy for the ccq state in Eq.~(\ref{XYAB}).
\label{Cor: qUEmono3}
\end{Cor}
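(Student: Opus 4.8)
The plan is to deduce both inequalities directly from Theorem~\ref{thm: Tqpoly3} together with the trade-off relations of Propositions~\ref{thm: qUEEqa} and~\ref{thm: qUEqD}, exactly as in the cited equivalence result~\cite{Kim19}. First I would establish the monogamy inequality~(\ref{UEmono3}) for $q$-UE. Applying Eq.~(\ref{qUEEqa}) of Proposition~\ref{thm: qUEEqa} to the bipartite cuts $AB$ and $AC$ of the pure state $\ket{\psi}_{ABC}$, and noting that $S_q(\rho_A)=E_q\!\left(\ket{\psi}_{A(BC)}\right)$ by definition~(\ref{qEpure}), one gets
\begin{align}
{\mathbf u}E_q^{\leftarrow}(\rho_{AB})&=S_q(\rho_A)-E^a_q\left(\rho_{AC}\right),\nonumber\\
{\mathbf u}E_q^{\leftarrow}(\rho_{AC})&=S_q(\rho_A)-E^a_q\left(\rho_{AB}\right).\nonumber
\end{align}
Adding these two identities and substituting $S_q(\rho_A)=E_q\!\left(\ket{\psi}_{A(BC)}\right)$ gives
\begin{align}
{\mathbf u}E_q^{\leftarrow}(\rho_{AB})+{\mathbf u}E_q^{\leftarrow}(\rho_{AC})
=2E_q\!\left(\ket{\psi}_{A(BC)}\right)-\left[E^a_q\left(\rho_{AB}\right)+E^a_q\left(\rho_{AC}\right)\right].\nonumber
\end{align}
Now Theorem~\ref{thm: Tqpoly3}, under its subadditivity hypothesis, says $E_q\!\left(\ket{\psi}_{A(BC)}\right)\le E^a_q\left(\rho_{AB}\right)+E^a_q\left(\rho_{AC}\right)$, so the bracketed term is at least $E_q\!\left(\ket{\psi}_{A(BC)}\right)$; substituting this bound yields ${\mathbf u}E_q^{\leftarrow}(\rho_{AB})+{\mathbf u}E_q^{\leftarrow}(\rho_{AC})\le E_q\!\left(\ket{\psi}_{A(BC)}\right)$. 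Finally, since $\ket{\psi}_{A(BC)}$ is a pure state on the cut $A|BC$, one has ${\mathbf u}E_q^{\leftarrow}(\ket{\psi}_{A(BC)})=S_q(\rho_A)=E_q\!\left(\ket{\psi}_{A(BC)}\right)$ (a rank-1 measurement on $BC$ in the Schmidt basis already localizes all the entanglement, so the minimizing Tsallis-$q$ difference equals $S_q(\rho_A)$), which gives~(\ref{UEmono3}).

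For the polygamy inequality~(\ref{Dqpoly3}) of $q$-UD, I would use Proposition~\ref{thm: qUEqD}. Its identity~(\ref{qUEqD}), applied with the roles of the second and third parties arranged appropriately, reads $S_q(\rho_A)={\mathbf u}\delta_q^{\leftarrow}(\rho_{BA})+{\mathbf u}E_q^{\leftarrow}(\rho_{CA})$, and symmetrically $S_q(\rho_A)={\mathbf u}\delta_q^{\leftarrow}(\rho_{CA})+{\mathbf u}E_q^{\leftarrow}(\rho_{BA})$. Adding and rearranging,
\begin{align}
{\mathbf u}\delta_q^{\leftarrow}(\rho_{AB})+{\mathbf u}\delta_q^{\leftarrow}(\rho_{AC})
=2S_q(\rho_A)-\left[{\mathbf u}E_q^{\leftarrow}(\rho_{AB})+{\mathbf u}E_q^{\leftarrow}(\rho_{AC})\right].\nonumber
\end{align}
By the monogamy inequality~(\ref{UEmono3}) just proved, the bracketed sum is at most $S_q(\rho_A)={\mathbf u}\delta_q^{\leftarrow}(\ket{\psi}_{A(BC)})$ (the last equality again because, for the pure-state cut $A|BC$, the Tsallis-$q$ mutual entropy is $2S_q(\rho_A)$ while $q$-UE contributes $S_q(\rho_A)$). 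Hence ${\mathbf u}\delta_q^{\leftarrow}(\rho_{AB})+{\mathbf u}\delta_q^{\leftarrow}(\rho_{AC})\ge 2S_q(\rho_A)-S_q(\rho_A)={\mathbf u}\delta_q^{\leftarrow}(\ket{\psi}_{A(BC)})$, which is~(\ref{Dqpoly3}).

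The main obstacle is not the algebra — which is just adding pairs of identities and invoking Theorem~\ref{thm: Tqpoly3} — but pinning down the pure-state boundary values ${\mathbf u}E_q^{\leftarrow}(\ket{\psi}_{A(BC)})$ and ${\mathbf u}\delta_q^{\leftarrow}(\ket{\psi}_{A(BC)})$ correctly, since the whole reduction hinges on identifying these with $S_q(\rho_A)$. I would verify these either by direct evaluation using the Schmidt decomposition of $\ket{\psi}_{A(BC)}$ (a rank-1 measurement in the Schmidt basis of $BC$ sends $\rho_A$ to an ensemble of pure states, giving $\chi_q=S_q(\rho_A)$, and no ensemble can do better by the nonnegativity of Tsallis-$q$ difference), or simply by citing the corresponding computations in~\cite{Kim19}. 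Once these identities are in place, the corollary follows immediately.
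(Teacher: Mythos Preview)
Your approach is essentially the paper's: the paper does not give a proof at all beyond the sentence preceding the corollary, which simply invokes the equivalence from~\cite{Kim19} between the $q$-EOA polygamy inequality~(\ref{Tqpoly3}) and the inequalities~(\ref{UEmono3}),~(\ref{Dqpoly3}); you are just writing out that equivalence explicitly via Propositions~\ref{thm: qUEEqa} and~\ref{thm: qUEqD} and Theorem~\ref{thm: Tqpoly3}. One point to tighten: in your $q$-UD step the two instances of Eq.~(\ref{qUEqD}) you quote carry the subscripts $\rho_{BA},\rho_{CA}$ (measurement on $A$), yet your displayed sum silently switches to $\rho_{AB},\rho_{AC}$, which are the quantities appearing in~(\ref{Dqpoly3}); make sure you either relabel the parties in Proposition~\ref{thm: qUEqD} correctly before adding, or cite~\cite{Kim19} directly for that identity, since ${\mathbf u}\delta_q^{\leftarrow}$ and ${\mathbf u}E_q^{\leftarrow}$ are not symmetric in their two arguments.
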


Now, we generalize Theorem~\ref{thm: Tqpoly3} for an arbitrary multi-party quantum system.
\begin{Thm}
For $q\geq 1$, and any multi-party quantum state $\rho_{A_1A_2\cdots A_n}$ with two-party reduced density matrices
$\rho_{A_1A_i}$ for $i=2, \cdots , n$, we have
\begin{align}
E^a_{q}\left(\rho_{A_1(A_2\cdots A_n)}\right)
\leq& \sum_{i=2}^{n}E^a_{q}\left(\rho_{A_1A_i}\right),
\label{Tqpolyn}
\end{align}
conditioned on the subadditivity of Tsallis-$q$ mutual entropy for the ccq states in Eq.~(\ref{XYAB}).
\label{thm: Tqpolyn}
\end{Thm}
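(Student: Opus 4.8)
The plan is to prove the inequality by induction on the number of parties $n$, using Theorem~\ref{thm: Tqpoly3} as the tool that peels off one subsystem at a time. The base case $n=2$ is the trivial identity $E^a_q(\rho_{A_1A_2})\leq E^a_q(\rho_{A_1A_2})$. For the inductive step I assume the claim for every state shared among at most $n-1$ parties and take an $n$-party state $\rho_{A_1A_2\cdots A_n}$. First I would fix a pure-state decomposition
\begin{align}
\rho_{A_1(A_2\cdots A_n)}=\sum_k p_k\ket{\psi_k}_{A_1(A_2\cdots A_n)}\bra{\psi_k}
\end{align}
that attains the maximum defining $E^a_q\left(\rho_{A_1(A_2\cdots A_n)}\right)$, so that this quantity equals $\sum_k p_k^q E_q\left(\ket{\psi_k}_{A_1(A_2\cdots A_n)}\right)$.

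Next, for each $k$ I would regard $\ket{\psi_k}_{A_1A_2\cdots A_n}$ as a three-party pure state under the tripartition $A_1\mid A_2\mid (A_3\cdots A_n)$, embedding it into a three-qudit system of equal local dimensions if necessary. Applying Theorem~\ref{thm: Tqpoly3} to this tripartition, under the same subadditivity hypothesis on the ccq states, gives
\begin{align}
E_q\left(\ket{\psi_k}_{A_1(A_2\cdots A_n)}\right)\leq E^a_q\left(\rho^k_{A_1A_2}\right)+E^a_q\left(\rho^k_{A_1(A_3\cdots A_n)}\right),
\end{align}
where $\rho^k_{A_1A_2}=\T_{A_3\cdots A_n}\ket{\psi_k}\bra{\psi_k}$ and $\rho^k_{A_1(A_3\cdots A_n)}=\T_{A_2}\ket{\psi_k}\bra{\psi_k}$. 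Since $\rho^k_{A_1(A_3\cdots A_n)}$ is a state of the $n-1$ parties $A_1,A_3,\ldots,A_n$, the induction hypothesis yields $E^a_q\left(\rho^k_{A_1(A_3\cdots A_n)}\right)\leq\sum_{i=3}^{n}E^a_q\left(\rho^k_{A_1A_i}\right)$, and combining the two bounds gives $E_q\left(\ket{\psi_k}_{A_1(A_2\cdots A_n)}\right)\leq\sum_{i=2}^{n}E^a_q\left(\rho^k_{A_1A_i}\right)$.

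Now I would multiply this by $p_k^q$ and sum over $k$, which produces $E^a_q\left(\rho_{A_1(A_2\cdots A_n)}\right)\leq\sum_{i=2}^{n}\sum_k p_k^q E^a_q\left(\rho^k_{A_1A_i}\right)$. The last ingredient is a superadditivity-under-mixing property of $q$-EOA: whenever $\sigma=\sum_k p_k\sigma_k$, one has $\sum_k p_k^q E^a_q(\sigma_k)\leq E^a_q(\sigma)$. This holds because, taking an optimal pure-state decomposition $\sigma_k=\sum_j q^k_j\ket{\phi^k_j}\bra{\phi^k_j}$ of each $\sigma_k$, the family $\{p_kq^k_j,\ket{\phi^k_j}\}_{k,j}$ is a pure-state decomposition of $\sigma$, and the multiplicativity $(p_kq^k_j)^q=p_k^q(q^k_j)^q$ for nonnegative weights turns $\sum_{k,j}(p_kq^k_j)^q E_q(\ket{\phi^k_j})$ into $\sum_k p_k^q E^a_q(\sigma_k)$, which is therefore bounded above by $E^a_q(\sigma)$. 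Since $\rho_{A_1A_i}=\sum_k p_k\rho^k_{A_1A_i}$ for each $i$, applying this to $\sigma=\rho_{A_1A_i}$ collapses the right-hand side to $\sum_{i=2}^{n}E^a_q\left(\rho_{A_1A_i}\right)$, completing the induction.

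I do not expect any serious analytic obstacle, since the difficult estimate has already been carried out inside Theorem~\ref{thm: Tqpoly3}; the remaining work is bookkeeping. The points needing care are: matching the grouped system $(A_3\cdots A_n)$ and the induced reduced states correctly to the hypotheses of Theorem~\ref{thm: Tqpoly3}; checking that the blanket subadditivity hypothesis for ccq states built from two-qudit reductions indeed covers every reduced state that arises along the induction; and verifying that the concatenated decomposition in the superadditivity step is a legitimate pure-state decomposition. I would single out the superadditivity-under-mixing lemma for $q$-EOA as the step most worth stating explicitly, since it is exactly what allows the pure-state polygamy bound of Theorem~\ref{thm: Tqpoly3} to extend to a mixed $n$-party state $\rho_{A_1A_2\cdots A_n}$.
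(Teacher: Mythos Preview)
Your proposal is correct and uses the same ingredients as the paper's proof: Theorem~\ref{thm: Tqpoly3} for each pure component, the concatenated-decomposition argument you call ``superadditivity-under-mixing'' for $q$-EOA, and induction on the number of parties. The only organizational difference is that the paper first isolates the three-party \emph{mixed} case (your argument specialized to $n=3$) and then iterates that inequality directly, whereas you fold the decomposition and the superadditivity step into each pass of the induction; both routes are equivalent.
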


\begin{proof}
We first prove the theorem for any three-party mixed state $\rho_{ABC}$, then the validity of the theorem for an arbitrary $n$-party quantum state $\rho_{A_1A_2\cdots A_n}$ follows inductively.

For a three-party mixed state $\rho_{ABC}$, let us consider an optimal decomposition of
$\rho_{ABC}$ for $q$-EOA with respect to the bipartition between $A$ and $BC$, that is,
\begin{align}
\rho_{ABC}=\sum_i p_i\ket{\psi_i}_{ABC}\bra{\psi_i},
\label{opt1}
\end{align}
with
\begin{align}
E^a_{q}\left(\rho_{A(BC)}\right)=\sum_i p_i^q E_{q}\left(\ket{\psi_i}_{A(BC)}\right).
\label{optTEOA1}
\end{align}
From Theorem~\ref{thm: Tqpoly3}, each $\ket{\psi_i}_{ABC}$ in Eq.~(\ref{optTEOA1}) satisfies
\begin{align}
E_{q}\left(\ket{\psi_i}_{A(BC)}\right)\leq E^a_{q}\left(\rho^i_{AB}\right)+E^a_{q}\left(\rho^i_{AC}\right)
\label{polyi}
\end{align}
with $\rho^i_{AB}=\T_C \ket{\psi_i}_{ABC}\bra{\psi_i}$ and $\rho^i_{AC}=\T_B \ket{\psi_i}_{ABC}\bra{\psi_i}$,

For each $i$ and the two-qudit reduced density matrices $\rho_{AB}^i$ and $\rho_{AC}^i$,
let us consider their optimal decompositions for $q$-EOA, that is,
\begin{align}
\rho_{AB}^i=&\sum_{j}r_{ij}\ket{\phi_j^i}_{AB}\bra{\phi_j^i}~,
\rho_{AC}^i=&\sum_{l}s_{il}\ket{\mu_l^i}_{AB}\bra{\mu_l^i},
\label{optrho^i}
\end{align}
such that
\begin{align}
E_q^a\left(\rho_{AB}^i\right)=&\sum_{j}r_{ij}^q E_q\left(\ket{\phi_j^i}_{AB}\right)\nonumber\\
E_q^a\left(\rho_{AC}^i \right)=&\sum_{l}s_{il}^q E_q\left(\ket{\mu_l^i}_{AB}\right).
\label{qErho^i}
\end{align}

Now, we have
\begin{widetext}
\begin{align}
E_{q}^a\left(\rho_{A(BC)}\right)=&\sum_i p_i^q E_{q}\left(\ket{\psi_i}_{A(BC)}\right)\nonumber\\
\leq&\sum_i p_i^q\left( E^a_{q}\left(\rho^i_{AB}\right)+E^a_{q}\left(\rho^i_{AC}\right)\right)\nonumber\\
=&\sum_i p_i^q\left(\sum_{j}r_{ij}^q E_q\left(\ket{\phi_j^i}_{AB}\right)+
\sum_{l}s_{il}^q E_q\left(\ket{\mu_l^i}_{AB}\right)\right)\nonumber\\
=&\sum_{i,j} \left(p_i r_{ij}\right)^q E_q\left(\ket{\phi_j^i}_{AB}\right)+\sum_{i,l}\left(p_i s_{il}\right)^q E_q\left( \ket{\mu_l^i}_{AB}\right)\nonumber\\
\leq& E_{q}^a\left(\rho_{AB}\right)+E_{q}^a\left(\rho_{AC}\right),
\label{Tqpoly3mix}
\end{align}
\end{widetext}
where the first inequality is from Inequality~(\ref{polyi}), and the second inequality is due to
\begin{align}
\rho_{AB}=\sum_i p_i \rho_{AB}^i=&\sum_{i,j}p_i r_{ij}\ket{\phi_j^i}_{AB}\bra{\phi_j^i},\nonumber\\
\rho_{AC}=\sum_i p_i \rho_{AC}^i=&\sum_{i,l}p_i s_{il}\ket{\mu_l^i}_{AB}\bra{\mu_l^i}
\label{decompall}
\end{align}
and the definition of $q$-EOA in Eq.~(\ref{qEOA}). Thus Inequality~(\ref{Tqpolyn}) is true for three-party mixed states.

For general multi-party quantum system, we use the mathematical induction on the number of parties $n$; let us assume the polygamy inequality~(\ref{Tqpolyn}) is true for any $(n-1)$-party quantum state, and consider an $n$-party quantum state $\rho_{A_1A_2\cdots A_n}$ for $n \geq 4$. By considering $\rho_{A_1A_2\cdots A_n}$ as a three-party state with respect to the partition $A_1$, $A_2$ and $A_3\cdots A_n$,
Inequality~(\ref{Tqpoly3mix}) leads us to
\begin{align}
E^a_{q}\left(\rho_{A_1(A_2\cdots A_n)}\right)
\leq& E^a_{q}\left(\rho_{A_1A_2}\right)+E^a_{q}\left(\rho_{A_1(A_3\cdots A_n)}\right)
\label{polymixed1}
\end{align}
where $\rho_{A_1A_3\cdots A_n}=\T_{A_2}\rho_{A_1A_2\cdots A_n}$.

Because $\rho_{A_1A_3\cdots A_n}$ in Inequality~(\ref{polymixed1}) is a
$(n-1)$-party quantum state, the induction hypothesis assures that
\begin{align}
E^a_{q}\left(\rho_{A_1(A_3\cdots A_n)}\right) \leq
E^a_{q}\left(\rho_{A_1A_3}\right)+\cdots +E^a_{q}\left(\rho_{A_1A_n}\right).
\label{n-1poly}
\end{align}
Thus Inequalities~(\ref{polymixed1}) and (\ref{n-1poly}) lead us to the polygamy
inequality of multi-party entanglement in terms of $q$-EOA in (\ref{Tqpolyn}).
\end{proof}

As $q$-EOA is reduced to EOA  when $q$ tends to $1$, and due to the
the subadditivity of quantum mutual information for CCQ states,
Theorem~\ref{thm: Tqpolyn} encapsulates the results of general polygamy inequality of multi-party entanglement in terms of EOA~\cite{KimGP}.

%%%%%%%%%%%%%%%%%%%%%%%%%%%%%%%%%%%%%%%%%%%%%%%%%%%%%%%%%%%%%%%%%%%%%%
%%%                                                                %%%
%%%                           Conclusion                           %%%
%%%                                                                %%%
%%%%%%%%%%%%%%%%%%%%%%%%%%%%%%%%%%%%%%%%%%%%%%%%%%%%%%%%%%%%%%%%%%%%%%

\section{Conclusion}
\label{Conclusion}
We have characterized the polygamy property of multi-party quantum entanglement in terms of $q$-expectation value for the full range of $q\geq 1$. By using the generalized definitions of various classical and quantum correlations in terms of Tsallis $q$-entropy and $q$-expectation value, we have first provided some properties of $q$-expected correlations in relation to classical-classical-quantum(ccq) states. Based on these properties, we have established a class of polygamy inequalities of multi-party quantum entanglement in arbitrary dimensions in terms of $q$-EOA.

Due to the equivalence between the monogamy of $q$-UE and polygamy of $q$-EOA and $q$-UD, our results also guarantee the monogamy inequality of $q$-UE as well as the polygamy inequality of $q$-UD distributed in three-party quantum systems. We also note that, from the continuity of $q$-expectation value as well as Tsallis $q$-entropy, our results encapsulate the previous results of monogamy and polygamy inequalities based on von Neumann entropy as special cases.

Studying multi-party quantum correlations, especially in higher dimensions more than qubits,
is important and necessary for various reasons. In many quantum information processing tasks such as quantum communication and quantum cryptography, higher-dimensional quantum systems are sometimes considered to be more useful because they can provide higher coding density and thus stronger security compared with qubit systems. We also note that the monogamy and polygamy properties of multi-party entanglement play
a central role in quantum cryptography because they can bound the possible amount of correlation between the authenticated users and the eavesdropper; the fundamental concept of the security proof. Thus our results about monogamy and polygamy inequalities of $q$-expected quantum correlations in arbitrary high-dimensional quantum systems can provide good methods and rich references for the foundation of many secure quantum information processing tasks.

\section*{Acknowledgments}
This research was supported by Basic Science Research Program through the National Research Foundation of Korea(NRF) funded by the Ministry of Education(NRF-2017R1D1A1B03034727).

%%%%%%%%%%%%%%%%%%%%%%%%%%%%%%%%%%%%%%%%%%%%%%%%%%%%%%%%%%%%%%%%%%%%%%%%


\begin{thebibliography}{1}

\bibitem{tele}
C. H. Bennett, G. Brassard, C. Crepeau, R. Jozsa, A. Peres and W. K. Wootters,
Phys. Rev. Lett. {\bf 70}, 1895 (1993).

\bibitem{qkd1}
C. Bennett and G. Brassard, {\em in Proceedings of IEEE
International Conference on Computers, Systems, and Signal
Processing }(IEEE Press, New York, Bangalore, India, 1984), p.
175-179.

\bibitem{qkd2}
C. H. Bennett, Physical Review Letters {\bf 68}, 3121 (1992).

\bibitem{T04}
B. M. Terhal, IBM J. Research and Development {\bf 48}, 71 (2004).

\bibitem{KGS}
J.~S.~Kim, G.~Gour and B.~ C.~ Sanders,
Contemp. Phys. {\bf 53}, 5 p. 417-432 (2012).

\bibitem{ckw}
V. Coffman, J. Kundu and W. K. Wootters, Phys. Rev. A {\bf 61},
052306 (2000).

\bibitem{ov}
T. Osborne and F. Verstraete, Phys. Rev. Lett. {\bf 96}, 220503
(2006).

\bibitem{kds}
J. S. Kim, A. Das and B. C. Sanders,
Phys. Rev. A {\bf 79}, 012329 (2009).

\bibitem{KSRenyi}
J. S. Kim and B. C. Sanders,
J. Phys. A: Math. and Theor. {\bf 43}, 445305 (2010).

\bibitem{KimT}
J. S. Kim,
Phys. Rev. A {\bf 81}, 062328 (2010).

\bibitem{KSU}
J. S. Kim and B. C. Sanders,
J. Phys. A: Math. and Theor. {\bf 44}, 295303 (2011).

\bibitem{GBS}
G. Gour, S. Bandyopadhay and B. C. Sanders, J. Math. Phys. {\bf
48}, 012108 (2007).

\bibitem{BGK}
F. Buscemi, G. Gour and J. S. Kim,
Phys. Rev. A {\bf 80}, 012324 (2009).

\bibitem{KimGP}
J. S. Kim,
Phys. Rev. A {\bf 85}, 062302 (2012).

\bibitem{tsallis}
C. Tsallis, J. Stat. Phys. {\bf 52}, 479 (1988).

\bibitem{lv}
P. T. Landsberg and V. Vedral, Phys. Lett. A {\bf 247}, 211 (1998).

\bibitem{LP}
A. R. Lima and T. J. P. Penna, Phys. Lett. A {\bf 256}  p.~221--226 (1999).

\bibitem{Abe}
S. Abe, Astrophys. Space Sci. {\bf 305} p.~241--245 (2006).

\bibitem{rr}
A. K. Rajagopal and R. W. Rendell, Phys. Rev. A {\bf 72}, 022322 (2005).

\bibitem{bpcp}
J. Batle, A. R. Plastino, M. Casas and A. Plastino, J. Phys. A {\bf 35}, 10311 (2002).

\bibitem{vidal}
G. Vidal, J. Mod. Opt. {\bf 47}, 355 (2000).

\bibitem{ar}
S. Abe and A. K. Rajagopal, Physica A {\bf 289}, 157 (2001).

\bibitem{tlb}
C. Tsallis, S. Lloyd and M. Baranger, Phys. Rev. A {\bf 63}, 042104 (2001).

\bibitem{rc}
R. Rossignoli and N. Canosa, Phys. Rev. A {\bf 66}, 042306 (2002).

\bibitem{Kim19}
J. S. Kim, Phys. Rev. A {\bf 100}, 032327 (2019).
%arXiv:1907.13313 (2019)

\bibitem{spec}
Equivalently, for a quantum state $\rho$ with a spectral decomposition $\rho=\sum_i \lambda_i \ket{e_i}\bra{e_i}$.

\bibitem{bdsw}
C. H. Bennett, D. P. DiVincenzo, J. A. Smolin and W. K. Wootters,
Phys. Rev. A {\bf 54}, 3824 (1996).

\bibitem{cohen}
O. Cohen, Phys. Rev. Lett. {\bf 80}, 2493 (1998).

\bibitem{Kim16T}
J. S. Kim,
Phys. Rev. A {\bf 94}, 062338 (2016).

\bibitem{KW}
M. Koashi and A. Winter, Phys. Rev. A {\bf 69}, 022309 (2004).

\bibitem{discord}
H. Ollivier and W. H. Zurek,
Phys. Rev. Lett. {\bf 88}, 017109 (2001);
L. Henderson and V. Vedral,
J. Phys. A {\bf 34}, 6899 (2001).

\bibitem{qdiscord}
D. P. Chi, J. S. Kim and K. Lee,
Phys. Rev. A {\bf 87}, 062339 (2013).

\bibitem{XFL12}
Z. Xi, H. Fan and Y. Li,
Phys. Rev. A {\bf 85}, 052102 (2012).

\end{thebibliography}
\end{document}